\newtheorem{lemma}{Lemma}
\theoremstyle{definition}
\newtheorem{remark}{Remark}
\newtheorem{example}{Example}
\newcommand{\argmax}[1]{\underset{#1}{\operatorname{arg}\,\operatorname{max}}\;}
\newcommand{\AlgBinary}{\textrm{\textsc{Alg-Binary}}}
\newcommand{\AlgId}{\textrm{\textsc{Alg-Identical}}}
\newcommand{\APXhard}{\text{APX-hard}}
\newcommand{\EF}{\mathrm{EF}}
\newcommand{\EFx}{\mathrm{EFx}}
\newcommand{\F}{{\mathcal F}}
\renewcommand{\H}{\mathcal H}
\newcommand{\I}{{\mathcal I}}
\newcommand{\N}{{\mathbb N}}
\newcommand{\NPhard}{\text{NP-hard}}
\newcommand{\NW}{\mathrm{NSW}}
\newcommand{\NSW}{\mathrm{NSW}}
\renewcommand{\O}{\mathcal O}
\newcommand{\Partition}{\textrm{\textsc{Partition}}}
\newcommand{\PTAS}{\text{PTAS}}
\newcommand{\V}{\mathcal V}
\begin{document}

\title{\bfseries Greedy Algorithms for Maximizing Nash Social Welfare}

\author{Siddharth Barman\thanks{Indian Institute of Science. \texttt{barman@iisc.ac.in}  \\ \hspace*{13pt} Supported in part by a Ramanujan Fellowship (SERB - {SB/S2/RJN-128/2015}).}, \quad Sanath Kumar Krishnamurthy\thanks{Chennai Mathematical Institute. \texttt{sanathkumar9@cmi.ac.in}}, \quad Rohit Vaish\thanks{Indian Institute of Science. \texttt{rohitv@iisc.ac.in}}}

\date{}
\maketitle

\begin{abstract}
We study the problem of fairly allocating a set of indivisible goods among agents with additive valuations. The extent of fairness of an allocation is measured by its Nash social welfare, which is the geometric mean of the valuations of the agents for their bundles. While the problem of maximizing Nash social welfare is known to be $\APXhard{}$ in general, we study the effectiveness of \emph{simple}, \emph{greedy} algorithms in solving this problem in two interesting special cases.

First, we show that a simple, greedy algorithm provides a $1.061$-approximation guarantee when agents have \emph{identical} valuations, even though the problem of maximizing Nash social welfare remains \NPhard{} for this setting. Second, we show that when agents have \emph{binary} valuations over the goods, an \emph{exact} solution (i.e., a Nash optimal allocation) can be found in polynomial time via a greedy algorithm. Our results in the binary setting extend to provide novel, exact algorithms for optimizing Nash social welfare under \emph{concave} valuations. Notably, for the above mentioned scenarios, our techniques provide a \emph{simple} alternative to several of the existing, more sophisticated techniques for this problem such as constructing equilibria of Fisher markets or using real stable polynomials.
\end{abstract}

\section{Introduction}
\label{sec:Introduction}

We study the problem of fairly allocating a set of indivisible goods among agents with additive valuations for the goods. The fairness of an allocation is quantified by its \emph{Nash social welfare} \cite{N50bargaining,KN79nash}, which is the geometric mean of the valuations of the agents under that allocation. The notion of Nash social welfare has traditionally been studied in the economics literature for \emph{divisible} goods \cite{M04fair}, where it is known to possess strong \emph{fairness} and \emph{efficiency} properties \cite{V74equity}. Besides, this notion is also attractive from a \emph{computational} standpoint: For divisible goods, the Nash optimal allocation can be computed in polynomial time using the convex program of Eisenberg and Gale \cite{EG59consensus}.

For \emph{indivisible} goods, Nash social welfare once again provides notable fairness and efficiency guarantees \cite{CKM+16unreasonable}. However, the computational results in this setting are drastically different from its divisible counterpart. Indeed, it is known that the problem of maximizing Nash social welfare for indivisible goods is \APXhard{} when agents have additive valuations for the goods \cite{Lee17APX}. On the algorithmic side, the first constant-factor (specifically, $2.89$) approximation for this problem was provided by Cole and Gkatzelis \cite{CG15approximating}. This approximation factor was subsequently improved to $e$ \cite{AGS+17nash}, $2$ \cite{CDG+17convex} and, most recently, to $1.45$ \cite{BKV17Finding}. Similar approximation guarantees have also been developed for more general market models such as piecewise linear concave (PLC) utilities \cite{AGM+18nash}, budget additive valuations \cite{GHM18Approximating}, and multi-unit markets \cite{BGH+17earning}.
By and large, these approaches rely on either constructing an appropriate equilibrium of a Fisher market and later rounding it to an integral allocation, or using real stable polynomials. Although these approaches offer strong approximation guarantees in very general market models, they are also (justifiably) more involved and often lack a \emph{combinatorial} interpretation. Our interest in this work, therefore, is to understand the power of simple, combinatorial algorithms (in particular, \emph{greedy} techniques) in solving interesting special cases of this problem.

\paragraph{Our results and techniques}
We consider the Nash social welfare objective ($\NSW$) as a measure of fairness in and of itself, and  develop greedy algorithms for maximizing $\NSW{}$ either exactly or approximately. We focus on two special classes of additive valuations, namely \emph{identical} valuations (i.e., for any good $j$ and any pair of agents $i,k$, the value of the good $j$ for $i$ is equal to the value of the good for $k$; $v_{i,j} = v_{k,j}$) and \emph{binary} valuations (i.e., for every agent $i$ and good $j$, agent $i$'s value for $j$ is either $0$ or $1$; $v_{i,j} \in \{0,1\}$). The class of identical valuations is well-studied in the approximation algorithms literature, and binary valuations capture the setting where each agent finds a good either acceptable or not.

For \emph{identical} valuations, we show that a simple greedy algorithm provides a $1.061$-approximation to the optimal Nash social welfare (\Cref{thm:IdenticalVals}). Note that the problem of maximizing Nash social welfare remains $\NPhard{}$ even for identical valuations (via a reduction from the $\Partition{}$ problem \cite{RR10complexity}). Our algorithm (Algorithm~\ref{alg:IdenticalVals}) works by allocating the goods one by one in descending order of value. At each step, a good is allocated to the agent with the least valuation. This implicitly corresponds to greedily choosing an agent that provides the maximum improvement in $\NSW$. We show that the allocation returned by our algorithm satisfies an approximate version of \emph{envy-freeness} property (\Cref{lem:AlgIdentical_EFx}), and that any allocation with this property gives the desired approximation guarantee (\Cref{lem:EFx_implies_approx_Nash}). We remark that a polynomial time approximation scheme ($\PTAS{}$) is already known for this problem \cite{NR14minimizing}. However, this scheme uses Lenstra's algorithm for integer programs \cite{L83integer} as a subroutine, and hence is not combinatorial. Moreover, despite being polynomial time in principle, the actual running time of such algorithms often scales rather poorly. By contrast, our algorithm involves a single sorting step and at most one $\min(\cdot)$ operation for each good, thus it requires $\O(m \log m + mn)$ time overall.

For \emph{binary} valuations, we show that an \emph{exact} solution to the problem (i.e., a Nash optimal allocation) can be found by a greedy algorithm in polynomial time (\Cref{thm:BinaryVals}). However, unlike the algorithm for identical valuations which greedily picks an agent, our algorithm for binary valuations makes a greedy decision with respect to \emph{swaps} (or \emph{chains} of swaps) between a \emph{pair} of agents. A swap refers to taking a good away from one agent and giving it to another agent. A chain of swaps refers to a sequence of agents $u_1,u_2,\dots,u_\ell$ and goods $j_1,j_2,\dots,j_{\ell-1}$ such that $j_1$ is swapped from $u_1$ to $u_2$, $j_2$ is swapped from $u_2$ to $u_3$, and so on. Given any suboptimal allocation, our algorithm (Algorithm~\ref{alg:BinaryVals}) checks for every pair of agents whether there exists a chain of swaps between them that improves $\NW$. The pair that provides the greatest improvement is chosen, and the corresponding swaps made. We show that the algorithm makes substantial progress towards the Nash optimal after each such reallocation (\Cref{lem:Binary_main}), which provides the desired running time and optimality guarantees. An interesting feature of our algorithm is that the guarantee for additive valuations extends to a more general utility model where the valuation of each agent is a \emph{concave} function of its cardinality (i.e., the number of nonzero-valued goods in its bundle).\footnote{Notice that for binary and additive utilities, the valuation of each agent is a \emph{linear} function of its cardinality.} Prior work \cite{DS15maximizing} has shown that a Nash optimal can be found efficiently under binary valuations (via reduction to minimum-cost flow problem). However, these techniques crucially rely on valuations depending \emph{linearly} on cardinality, and it is unclear how to extend these to the aforementioned utility model. Our results, therefore, provide novel, exact algorithms for maximizing Nash social welfare under concave valuations.

\section{Preliminaries}
\label{sec:Preliminaries}

\paragraph{Problem instance} 
An \emph{instance} $\langle [n], [m], \V \rangle$ of the fair division problem is defined by (1) the set of $n \in \N$ \emph{agents} $[n] = \{1,2,\dots,n\}$, (2) the set of $m \in \N$ \emph{goods} $[m] = \{1,2,\dots,m\}$, and (3) the \emph{valuation profile} $\V = \{v_1,v_2,\dots,v_n\}$ that specifies the preferences of each agent $i \in [n]$ over the set of goods $[m]$ via a \emph{valuation function} $v_i: 2^{[m]} \rightarrow \mathbb{Z_+} \cup \{0\}$. Throughout, the valuations are assumed to be \emph{additive}, i.e., for any agent $i \in [n]$ and any set of goods $G \subseteq [m]$, $v_i(G) := \sum_{j \in G} v_i(\{j\})$, where $v_i(\{\emptyset\}) = 0$. We use the shorthand $v_{i,j}$ instead of $v_i(\{j\})$ for a singleton good $j \in [m]$. We use $\Gamma_i$ to denote the set of goods that are positively valued by agent $i$, i.e., $\Gamma_i := \{ j \in [m] \, : \, v_{i,j} > 0 \}$.

\paragraph{Binary and identical valuations}
We say that agents have \emph{binary} valuations if for each agent $i \in [n]$ and each good $j \in [m]$, $v_{i,j} \in \{0,1\}$. In addition, we say that agents have \emph{identical} valuations if for any good $j \in [m]$ and any pair of agents $i,k \in [n]$, we have $v_{i,j} = v_{k,j}$. For identical valuations, we will assume, without loss of generality, that the value of each good is nonzero.

\paragraph{Allocation}
An \emph{allocation} $A \in \{0,1\}^{n \times m}$ refers to an $n$-partition $(A_1,\dots,A_n)$ of $[m]$, where $A_i \subseteq [m]$ is the \emph{bundle} allocated to the agent $i$. Let $\Pi_n([m])$ denote the set of all $n$ partitions of $[m]$. Given an allocation $A$, the valuation of an agent $i \in [n]$ for the bundle $A_i$ is $v_i(A_i) = \sum_{j \in A_i} v_{i,j}$. An allocation is said to be \emph{non-wasteful} if it does not assign a zero-valued good to any agent, i.e., for each agent $i$ and each good $j \in A_i$, we have $v_{i,j} > 0$. We will use the terms \emph{allocation} and \emph{partition} interchangeably whenever the set of goods $[m]$ is clear from the context.

\paragraph{Nash social welfare}
Given an instance $\I = \langle [n], [m], \V \rangle$ and an allocation $A$, the \emph{Nash social welfare} of $A$ is given by $\NW(A) := \left( \prod_{i \in [n]} v_i(A_i) \right)^{1/n}$. An allocation $A^*$ said to be \emph{Nash optimal} if $A^* \in \arg\max_{A \in \Pi_n([m])} \NW(A)$. An allocation $B$ is said to be a \emph{$\beta$-approximation} (where $0 \leq \beta \leq 1$) for the instance $\I$ if $\NW(B) \geq \beta \cdot \NW(A^*)$. For the approximation guarantees to be meaningful, we will assume that the Nash optimal for any given instance has nonzero Nash social welfare.

\section{Main Results}
\label{sec:Results}

We provide two main results: a $1.061$-approximation algorithm for identical valuations (\Cref{thm:IdenticalVals}), and an exact algorithm for binary valuations (\Cref{thm:BinaryVals}). The proofs of these results are presented in \Cref{sec:Proof_identical,sec:Proof_binary} respectively.

\begin{restatable}[\textbf{Identical valuations}]{theorem}{IdenticalVals}
 \label{thm:IdenticalVals}
 Given any fair division instance with additive and identical valuations, there exists a polynomial time $1.061$-approximation algorithm for the Nash social welfare maximization problem.
\end{restatable}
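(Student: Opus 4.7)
The plan is to follow the three-step structure already previewed in the introduction. First, I would present the greedy procedure formally: sort the $m$ goods so that $v_1 \geq v_2 \geq \dots \geq v_m$ (we can drop agent subscripts since valuations are identical), and process the goods in this order, always assigning the current good to the agent whose bundle has the smallest total valuation so far (breaking ties arbitrarily). The running time is immediate: $O(m\log m)$ for sorting plus $O(n)$ work per good, giving $O(m\log m + mn)$.

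Second, I would prove \Cref{lem:AlgIdentical_EFx}, namely that the output is \EFx{}: for every pair of agents $i, k$ and every $g \in A_k$ (recall valuations are nonzero by assumption), $v(A_i) \geq v(A_k) - v_g$. The natural witness is the last good added to $A_k$ during the greedy process; call it $g^\star$. Since goods are processed in descending order and $g^\star$ is the most recent good added to $A_k$, every other good in $A_k$ was added earlier and hence has value at least $v_{g^\star}$, so $v_{g^\star}$ is the minimum in $A_k$. At the moment $g^\star$ was assigned, agent $k$'s bundle was of minimum total value, so $v^{(t)}(A_k) \leq v^{(t)}(A_i)$; adding later goods only weakly increases $v(A_i)$, while $v(A_k) = v^{(t)}(A_k) + v_{g^\star}$ is frozen. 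Hence $v(A_i) \geq v(A_k) - v_{g^\star}$, establishing \EFx{}.

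Third, I would prove \Cref{lem:EFx_implies_approx_Nash}: any \EFx{} allocation (for identical additive valuations) achieves \NSW{} within a factor $1.061$ of the optimum. For this step, I would index the agents in the greedy allocation $A$ so that $v(A_1) \leq v(A_2) \leq \dots \leq v(A_n)$, and let $g_k$ denote the smallest-valued good in $A_k$. \EFx{} gives $v(A_1) \geq v(A_k) - v_{g_k}$ for all $k \geq 2$. Using these inequalities together with the trivial upper bound $\NW(A^\star) \leq \frac{1}{n}\sum_k v(A_k)$ (by AM--GM applied to the optimum), I would reduce the approximation ratio to a finite-dimensional optimization problem: maximize $\frac{\left(\frac{1}{n}\sum_k x_k\right)^n}{\prod_k x_k}$ subject to $x_1 \leq x_k$ and $x_k - g_k \leq x_1$ for values $g_k$ that come from a sorted list (so $g_k$'s are dominated by goods already placed in other bundles). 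A short Lagrangian / single-variable calculus argument on the worst case (essentially an $n=2$ instance with one ``big'' and several ``small'' goods) pins the ratio at $1.061$.

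The main obstacle will be the third step: squeezing out the clean constant $1.061$ from the \EFx{} inequalities. Steps one and two are structural and essentially mechanical once the right witness good is identified, whereas step three requires showing that the worst case of the optimization is attained by a very specific, low-dimensional configuration of good values and, having done so, computing the resulting ratio. I expect that the proof will handle small $n$ (especially $n = 2$) as the tight case and then argue monotonicity in $n$ to extend to all instances.
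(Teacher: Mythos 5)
Your first two steps are correct and essentially identical to the paper's: the same greedy rule, the same running time, and the same witness for \EFx{} (the last good placed in $A_k$ is its least-valued good, and at the moment it was placed $A_k$ was a minimum-value bundle, so $v(A_k) - v_{g^\star} \leq v(A_i)$ for all $i$). The paper phrases this as an induction over iterations, but the content is the same.

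The gap is in your third step, and it is a real one: the benchmark $\NW(A^*) \leq \frac{1}{n}\sum_k v(A_k)$ is far too weak, and the finite-dimensional optimization you write down is actually unbounded. Concretely, take $n=2$ with one good of value $M$ and one good of value $1$. The (unique, \EFx{}, and in fact optimal) allocation has $\NW = \sqrt{M}$, while your AM--GM benchmark is $(M+1)/2$, so the ratio you would be bounding tends to $0$ as $M \to \infty$. In your optimization this shows up as follows: a bundle $A_k$ consisting of a single huge good $g_k$ satisfies the \EFx{} constraint $x_k - g_k \leq x_1$ vacuously (the left side is $0$), so nothing prevents $x_k$ from being arbitrarily large relative to the others, and $\bigl(\frac{1}{n}\sum_k x_k\bigr)^n / \prod_k x_k$ blows up. The missing idea is that the \emph{true} optimum also cannot split such a good, so the equal-split benchmark must not be applied to it. The paper handles this by introducing a partially-fractional relaxation: letting $\ell = \min_k v(A_k)$, \EFx{} forces every bundle of value $> 2\ell$ to be a singleton; those singleton goods are required to stay integral in the relaxed benchmark $A^\F$, while all remaining goods may be split. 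One then shows $A^\F$ gives each large good its own singleton bundle and equalizes everything else at some level $\alpha\ell$ with $1 \leq \alpha < 2$, so the large goods cancel in the ratio $\NW(A)/\NW(A^\F)$, and the remaining comparison reduces to minimizing $2^{\alpha-1}/\alpha$ over $\alpha \in [1,2)$, which gives $\tfrac{1}{2}e\ln 2 \approx 1/1.061$ at $\alpha = 1/\ln 2$. (A further wrinkle you would also need: the paper's bound carries a $-\frac{1}{n-s}$ error term in the exponent, which is dispatched by a replication argument rather than by treating $n=2$ as the worst case.) Without restricting the benchmark to keep large goods integral, no constant-factor guarantee can be extracted from the \EFx{} inequalities alone.
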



\begin{restatable}[\textbf{Binary valuations}]{theorem}{BinaryVals}
 \label{thm:BinaryVals}
 Given any fair division instance with additive and binary valuations, a Nash optimal allocation can be computed in polynomial time.
\end{restatable}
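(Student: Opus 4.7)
The plan is to analyze a local-search algorithm that starts from an allocation of positive NSW and repeatedly applies the most profitable \emph{chain swap}. Formally, a chain swap is specified by distinct agents $u_1,\dots,u_\ell$ and goods $j_1,\dots,j_{\ell-1}$ with $j_k \in A_{u_k}$ and $v_{u_k,j_k} = v_{u_{k+1},j_k} = 1$; its effect is to reassign each $j_k$ from $u_k$ to $u_{k+1}$. Because each intermediate agent simultaneously loses and gains exactly one valued good, only the endpoint cardinalities change: $v_{u_1}(A_{u_1})$ drops by $1$ and $v_{u_\ell}(A_{u_\ell})$ rises by $1$. A direct calculation shows the swap strictly improves $\NSW$ iff $v_{u_1}(A_{u_1}) \geq v_{u_\ell}(A_{u_\ell}) + 2$. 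In a preprocessing phase I would also repeatedly move any good from an agent who does not value it to one who does (this trivially improves $\NSW$), ensuring that every good with at least one valuer is held by a valuer; the initial allocation itself exists by the standing assumption that the Nash optimum has positive NSW and can be built via a bipartite matching that saturates all agents.

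In each iteration, I would select the pair of agents whose best chain swap gives the largest multiplicative gain in $\NSW$. This can be implemented by building the digraph $H(A)$ on agents with arcs $u \to v$ whenever some $j \in A_u$ satisfies $v_{u,j} = v_{v,j} = 1$, so that chain swaps correspond to directed paths in $H(A)$; running BFS from every vertex identifies the best source-sink pair in polynomial time. To bound the number of iterations, I would use the potential $\phi(A) := \sum_i v_i(A_i)^2$. A chain swap that takes endpoint cardinalities from $(a,b)$ with $a \geq b+2$ to $(a-1,b+1)$ changes $\phi$ by $(a-1)^2 + (b+1)^2 - a^2 - b^2 = -2(a-b-1) \leq -2$. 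Since $\phi$ is a nonnegative integer bounded by $\bigl(\sum_i v_i(A_i)\bigr) \cdot \max_i v_i(A_i) \leq m^2$, the main loop runs at most $\O(m^2)$ times, plus $\O(m)$ preprocessing moves, giving polynomial total running time.

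The main obstacle is correctness at termination: showing that an allocation admitting no improving chain swap and no wasted good must be Nash optimal. My plan is an exchange argument. Given any competing allocation $A^*$, both $A$ and $A^*$ are saturated and hence satisfy $\sum_i v_i(A_i) = |C|$ where $C := \bigcup_i \Gamma_i$. Consider the symmetric difference of the bipartite valued-assignment graphs of $A$ and $A^*$ (edges $(i,j)$ with $v_{i,j}=1$ and $j$ assigned to $i$). This symmetric difference decomposes into alternating paths and cycles; cycles leave the cardinality vector unchanged, while each maximal alternating path corresponds exactly to a chain swap applicable to $A$, moving one unit of cardinality from its source endpoint to its sink endpoint. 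The union of these chain swaps transforms the cardinality vector of $A$ into that of $A^*$. If $\NSW(A^*) > \NSW(A)$, then since the geometric mean is Schur-concave on vectors with a fixed sum, at least one of these chain swaps (applied in a suitable order) must have source cardinality exceeding sink cardinality by at least $2$ — an improving chain swap in $A$, contradicting local optimality.

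The delicate point in the exchange argument is isolating a \emph{single} improving chain among many in the decomposition: the aggregate effect increases $\NSW$, yet a priori no individual chain might look improving when evaluated at $A$. I plan to order the decomposed chains by decreasing source-minus-sink cardinality gap and argue that if every chain in this ordering satisfied $v_{\text{source}}-v_{\text{sink}} \leq 1$ at the moment it is applied, the cumulative product $\prod_i v_i(\cdot)$ could not strictly increase along the sequence — a pigeonhole consequence of Schur-concavity. This is precisely the statement I expect the main technical lemma of the section to establish.
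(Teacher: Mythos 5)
Your algorithm coincides with the paper's \AlgBinary{} (greedy chain swaps, located via reachability in a digraph on agents), and the heart of your termination-correctness argument is the same decomposition the paper uses inside \Cref{lem:Binary_main}: the difference between $A$ and a Nash optimal $A^*$ splits into cycles (which leave the cardinality vector untouched) and edge-disjoint chains, each running from an agent with excess cardinality to one with a deficit. Where you genuinely depart is the convergence analysis. The paper proves a quantitative contraction---each greedy step closes at least a $1/m$ fraction of the gap in $\ln\NW$---runs for exactly $2m(n+1)\ln(nm)$ iterations, and then invokes integrality of the valuations to turn the tiny residual gap into exact optimality. You instead bound the iteration count by the integer potential $\sum_i v_i(A_i)^2 \leq m^2$, which falls by at least $2$ per improving swap, and separately argue that a swap-local optimum is globally optimal. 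Your route is more elementary (no contraction lemma, no final rounding step) and yields a comparable $\O(m^2)$ iteration bound; the price is that the paper's quantitative lemma is exactly what powers its extension to concave-in-cardinality valuations (\Cref{cor:BinaryValsConcaveUtility}), where a cardinality-squared potential can \emph{increase} under an improving swap.

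One step in your plan should be stated differently. The bridge from ``some chain is improving at the moment it is applied'' to ``some chain is improving at $A$'' is not an ordering or Schur-concavity argument; it is the observation that every decomposed chain has its source in $\mathcal{E}=\{u: |A_u|>|A^*_u|\}$ and its sink in the disjoint set $\mathcal{D}$, so as the chains are applied \emph{in any order}, source cardinalities only decrease and sink cardinalities only increase. Hence if every chain has source-minus-sink gap at most $1$ measured at $A$, the same holds at the moment each chain is applied, every step weakly decreases the product, and $\NW(A^*)\leq\NW(A)$ follows by plain telescoping---no special ordering is needed. (This monotonicity is precisely what the paper uses to establish \Cref{ineq:int}.) With that substitution your argument is sound; the remaining details---splitting high-degree agent vertices so the difference multigraph genuinely decomposes into simple paths and cycles, and using non-wastefulness of $A$ and $A^*$ so every decomposed chain is a legal swap chain---are routine and handled as in the paper's construction of $H^*$.
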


\section{Identical Valuations: Proof of Theorem~\ref{thm:IdenticalVals}}
\label{sec:Proof_identical}
This section provides the proof of \Cref{thm:IdenticalVals}, which we recall below:
\IdenticalVals*

Our proof of \Cref{thm:IdenticalVals} relies on two intermediate results: First, we will show in \Cref{lem:AlgIdentical_EFx} that the allocation computed by the greedy algorithm (called $\AlgId{}$, given in Algorithm~\ref{alg:IdenticalVals}) satisfies an approximate envy-freeness property called $\EFx{}$, defined below. We will then show in \Cref{lem:EFx_implies_approx_Nash} that any allocation with this property---in particular, the allocation computed by $\AlgId{}$---provides a $1.061$ approximation guarantee.

We start by describing the notion of envy-freeness and some of its variants.

\paragraph{Envy-freeness and its variants}
Given an instance $\langle [n], [m], \V \rangle$ and an allocation $A$, we say that an agent $i \in [n]$ \emph{envies} another agent $k \in [n]$ if $i$ prefers the bundle of $k$ over its own bundle, i.e., $v_i(A_k) > v_i(A_i)$. An allocation $A$ is said to be \emph{envy-free} ($\EF{}$) if each agent prefers its own bundle over that of any other agent, i.e., for every pair of agents $i,k \in [n]$, we have $v_i(A_i) \geq v_i(A_k)$. Likewise, an allocation $A$ is said to be \emph{envy-free up to the least positively valued good} ($\EFx{}$) if for every pair of agents $i,k \in [n]$, we have $v_i(A_i) \geq v_i(A_k \setminus \{j\})$ for every $j \in A_k$ such that $v_{i,j} > 0$. The notion of $\EFx{}$ first appeared in the work of Caragiannis et al.~\cite{CKM+16unreasonable}. Plaut and Roughgarden~\cite{PR18almost} study the existence of $\EFx$ allocations for special cases of the fair division problem.

We will now describe our algorithm called $\AlgId{}$.

\begin{algorithm}[t]
 \DontPrintSemicolon
 \KwIn{An instance $\langle [n], [m], \V \rangle$ with identical, additive valuations.}
 \KwOut{An allocation $A$.}
 \BlankLine
 Order the goods in descending order of value, i.e., $v(j_1) \geq v(j_2) \geq \dots v(j_m) >0$.\;
 Set $A \leftarrow (\emptyset,\emptyset,\dots,\emptyset)$.\;
 \For{$\ell = 1$ to $m$}{
 	Set $i \leftarrow \arg\min_{k \in [n]} v(A_k)$ \tcp*{ties are broken lexicographically}
 	$A_i \leftarrow A_i \cup \{j_\ell\}$ \tcp*{Allocate the good $j_\ell$ to the agent with the least valuation }
 }
 \KwRet $A$
 \caption{Greedy Algorithm for Identical Valuations (\AlgId{})}
\label{alg:IdenticalVals}
\end{algorithm}

\paragraph{Greedy algorithm for identical valuations}
As mentioned earlier in \Cref{sec:Introduction}, our algorithm $\AlgId{}$ (Algorithm~\ref{alg:IdenticalVals}) allocates the goods one by one in descending order of their value. In each iteration, a good is assigned to the agent with the least valuation. Assigning goods in this manner ensures that at each step, the algorithm picks the agent providing the greatest improvement in $\NSW$. It is easy to see that \AlgId{} runs in polynomial time. Our next result (\Cref{lem:AlgIdentical_EFx}) shows that $\AlgId{}$ always outputs an $\EFx{}$ allocation.

\begin{restatable}{lemma}{AlgIdEFx}
\label{lem:AlgIdentical_EFx}
The allocation $A$ returned by \AlgId{} is $\EFx{}$.
\end{restatable}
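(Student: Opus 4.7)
The plan is to exploit two structural features of $\AlgId{}$ simultaneously: goods are processed in descending order of value, and in each iteration the next good is handed to the agent with the currently smallest bundle. Because the valuations are identical, I will drop the agent subscript and write $v(\cdot)$ throughout.

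I would fix an arbitrary pair of agents $i, k \in [n]$ and, assuming $A_k \neq \emptyset$ (otherwise $\EFx{}$ is vacuous for this pair), let $j_\ell$ denote the last good handed to $k$ by the algorithm. Since goods are allocated in descending order of value, every good of $A_k$ was added in some iteration $\ell' \leq \ell$, which forces $v(j) \geq v(j_\ell)$ for all $j \in A_k$. Thus $j_\ell$ is a minimum-value good in $A_k$, and it suffices to check the $\EFx{}$ inequality against $j_\ell$: the quantity $v(A_k \setminus \{j\}) = v(A_k) - v(j)$ is maximized precisely when $v(j)$ is minimized.

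I would then freeze the partial allocation $A' = (A'_1, \dots, A'_n)$ as it stood just before iteration $\ell$. By construction, $A'_k = A_k \setminus \{j_\ell\}$ and $A'_i \subseteq A_i$, so $v(A_i) \geq v(A'_i)$. The selection rule at iteration $\ell$ picked $k$ as the agent with the smallest bundle value, hence $v(A'_i) \geq v(A'_k) = v(A_k) - v(j_\ell)$. Chaining these two bounds yields $v(A_i) \geq v(A_k \setminus \{j_\ell\})$, and combining with $v(j) \geq v(j_\ell)$ gives $v(A_i) \geq v(A_k \setminus \{j\})$ for every $j \in A_k$, which is exactly the $\EFx{}$ condition (under identical valuations, $v_i = v_k$ makes the definition symmetric and the $v_{i,j} > 0$ caveat is automatic since every good has positive value by the standing assumption).

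I do not anticipate a real obstacle: the entire argument is a careful bookkeeping of a single iteration of the algorithm. The only conceptual step is recognizing that the last good assigned to agent $k$ is the correct ``witness'' good to remove; once that is spotted, the minimum-bundle selection rule delivers the required inequality essentially for free. The tie-breaking convention plays no role, and no case analysis on $n$, $m$, or the structure of $\V$ is needed.
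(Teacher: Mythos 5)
Your proof is correct and rests on exactly the same two observations as the paper's: the last good an agent receives is the least valuable in its bundle (descending order), and at that moment the receiving agent had the minimum bundle value (selection rule). The paper packages this as an induction on iterations, maintaining $\EFx{}$ as an invariant, while you unroll it into a direct argument about the last iteration in which the envied agent $k$ receives a good; the substance is identical.
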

\begin{proof}
Let $A^\ell$ be the allocation maintained by \AlgId{} at the end of the $\ell^{\text{th}}$ iteration. It suffices to show that for each $\ell \in [m]$, if $A^{\ell - 1}$ is $\EFx{}$, then so is $A^\ell$.

Write $j_\ell$ to denote the good allocated in the $\ell^{\text{th}}$ iteration, and let $i$ be the agent that receives this good; thus $A^\ell_i = A^{\ell-1}_i \cup \{j_\ell\}$. Notice that only the valuation of agent $i$ is affected by the assignment of $j_\ell$, while the allocation any other agent $k \in [n] \setminus \{i\}$ is unchanged. Therefore, in order to establish that $A^\ell$ is $\EFx{}$, we only need to consider agent $i$ and show that $v(A^\ell_i \setminus \{j\}) \leq v(A^\ell_k)$ for all $k \in [n]$ and each $j \in A^{\ell}_i$. Since $\AlgId{}$ processes the goods in decreasing order of value, the good $j_\ell$ is the least valued good in $A^\ell_i$. Thus, for any $j \in A^\ell_i$, we have that $v(A^\ell_i \setminus \{ j \}) \leq v(A^\ell_i \setminus \{ j_\ell \}) = v (A^{\ell-1}_i) \leq v(A^\ell_k)$ for all $k \in [n]$; here, the last inequality follows from the agent selection rule of $\AlgId{}$, i.e., the fact that $i \in \arg\min_{k \in [n]} v(A^{\ell-1}_k)$. This shows that $A^\ell$ must be $\EFx{}$.
\end{proof}

Our final result in this section shows that any $\EFx{}$ allocation provides a $1.061$ approximation to Nash social welfare when the valuations are additive and identical.

\begin{restatable}{lemma}{EFxApproxNash}
\label{lem:EFx_implies_approx_Nash}
Let $\I = \langle [n], [m], \V \rangle$ be an instance with additive and identical valuations, and let $A$ be an $\EFx{}$ allocation for $\I$. Then, $\NW(A) \geq \frac{1}{1.061} \NW(A^*)$, where $A^*$ is the Nash optimal allocation for $\I$.
\end{restatable}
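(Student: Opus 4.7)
The plan is to upper bound $\NSW(A^*)/\NSW(A)$ by first using $\EFx$ to constrain the bundle values $a_i := v(A_i)$ of $A$, and then comparing $\NSW(A^*)$ to an explicit idealization.

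First, reorder the agents so that $a_1 \leq \dots \leq a_n$. Under identical valuations, $\EFx$ is equivalent to $a_i - a_1 \leq g_i := \min_{j \in A_i} v(j)$ for every $i$; since every good in $A_i$ has value at least $g_i$, one gets $|A_i| \cdot g_i \leq a_i$, whence $|A_i| \leq a_i/(a_i - a_1)$. The key structural consequence is that any agent with $a_i > 2 a_1$ owns exactly one (``big'') good of value $a_i$. Accordingly, partition the agents into $S := \{i : a_i \leq 2 a_1\}$ and $B := \{i : a_i > 2 a_1\}$, and set $k := |S|$ and $V_S := \sum_{i \in S} a_i$, noting $V_S \leq 2 k a_1$.

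The crucial step is then to establish the central inequality
$$
\NSW(A^*)^n \;\leq\; \Big(\prod_{i \in B} a_i\Big) \cdot (V_S/k)^k.
$$
Among all valid reallocations of the fixed good multiset, the claim is that $\prod_i b_i$ is maximized when (i) each big good sits alone in its own $A^*$-bundle and (ii) the remaining small goods are balanced among the other $k$ bundles. Both parts follow from the key separation $a_i > V_S/k$ for $i \in B$ (immediate from $V_S/k \leq 2 a_1 < a_i$): leaking $\epsilon$ of small value into a big bundle changes $\log \prod$ by $\epsilon(1/a_i - k/V_S) < 0$, and merging two big goods into a single bundle (forcing $k+1$ balanced small bundles instead of $k$) multiplies $\prod$ by at most $2(k/(k+1))^{k+1} < 2/e < 1$. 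The balancing within the small bundles is then standard AM-GM. Dividing $\NSW(A)^n = \prod_{i \in B} a_i \cdot \prod_{i \in S} a_i$ by this bound yields
$$
\frac{\NSW(A)}{\NSW(A^*)} \;\geq\; \left(\frac{\mathrm{GM}_S}{\mathrm{AM}_S}\right)^{k/n},
$$
where $\mathrm{GM}_S := (\prod_{i \in S} a_i)^{1/k}$ and $\mathrm{AM}_S := V_S/k$.

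The final step is to minimize $\mathrm{GM}_S/\mathrm{AM}_S$ over $a_i \in [a_1, 2 a_1]$. By concavity of $\log$, the extremum is attained on two-valued configurations: a fraction $q \in [0,1]$ of agents at $2 a_1$ and the rest at $a_1$, reducing the problem to minimizing $2^q/(1+q)$ over $q \in [0,1]$. Standard calculus gives $q^* = 1/\ln 2 - 1$ and minimum value $(e \ln 2)/2 \approx 0.942$. Since $(e \ln 2)/2 < 1$ and $k/n \leq 1$, we obtain $\NSW(A)/\NSW(A^*) \geq (e \ln 2)/2$, corresponding to an approximation factor of $2/(e \ln 2) \approx 1.061$. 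The hard part will be rigorously justifying the upper bound on $\NSW(A^*)$---in particular, ruling out clever repackings in $A^*$ that merge big goods or leak small goods into big bundles; once the single separation $a_i > V_S/k$ for $i \in B$ is noted, both cases collapse to short algebraic comparisons.
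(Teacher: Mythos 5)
Your proposal is correct, and its skeleton is the same as the paper's: $\EFx$ forces every bundle of value above $2a_1$ to be a single good, and $\NW(A^*)^n$ is bounded above by the partially-fractional idealization $\prod_{i\in B} a_i \cdot (V_S/k)^k$ in which the big goods sit alone and the remaining value is split evenly over the other $k$ bundles. For your ``hard part,'' the paper does exactly what you anticipate: it defines the set of partially-fractional allocations (big goods integral, the rest divisible) and characterizes its Nash optimum via the same exchange arguments you sketch --- your separation $a_i > V_S/k$ is precisely what makes the ``no leakage'' and ``no merging'' exchanges strictly improving --- so this step should be phrased as a characterization of the optimum of that relaxation rather than a comparison against two hand-picked rivals, but it does go through. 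Where you genuinely diverge, and in fact simplify, is the endgame. The paper lower-bounds $\NW(A)$ by constructing a degraded allocation $A'$ that rebalances the small bundles of $A$ into values $\{\ell,2\ell\}$; this introduces a $\pm 1$ counting slop and forces an asymptotic ``replicate the instance $c$ times'' argument to discard a $2^{-1/(n-s)}$ factor. Your route --- $\NW(A)/\NW(A^*) \ge (\mathrm{GM}_S/\mathrm{AM}_S)^{k/n} \ge \mathrm{GM}_S/\mathrm{AM}_S$, followed by the chord inequality $\log x \ge (x-1)\log 2$ on $[a_1,2a_1]$, which gives $\mathrm{GM}_S \ge a_1 2^{\mu-1}$ for $\mu=\mathrm{AM}_S/a_1\in[1,2]$ and hence $\min_\mu 2^{\mu-1}/\mu = \tfrac{1}{2}e\ln 2$ --- reaches the same optimization with no asymptotics and no auxiliary allocation. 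Both approaches land on the same constant $2/(e\ln 2)\approx 1.061$.
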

\begin{proof}
For notational convenience, we reindex the bundles in the allocation $A$ such that $v(A_1) \geq v(A_2) \geq \dots \geq v(A_n)$, where $v$ denotes the (additive and identical) valuation function for all agents. Let $\ell := \min_k v(A_k)$ denote the valuation of the least valued bundle under $A$ (thus $v(A_n) = \ell$).

For any agent $k \in [n-1]$ with two or more goods in $A_k$, $\EFx{}$ property implies that
\begin{equation}
v(A_k) \leq 2\ell.
\label{eqn:EFx_condition2}
\end{equation}
In particular, \Cref{eqn:EFx_condition2} implies that if $v(A_k) > 2\ell$ for some $k \in [n-1]$, then $A_k$ consists of exactly one good. Let $S := \{k \in [n] \, : \, v(A_k) > 2\ell\}$ denote the set of agents with such singleton bundles. Write $s = |S|$ and let $A_S := \{j_1, j_2, \ldots, j_s \}$ denote the set of goods owned by the agents in $S$.

For analysis, we will now consider a set of allocations where only the goods in $A_S$ are required to be allocated integrally, and any other good can be allocated fractionally among the agents. Formally, we define a \emph{partially-fractional allocation} $B \in [0,1]^{n \times m}$ as follows: for every good $j \in A_S$, $B_{i,j} \in \{0,1\}$ for any agent $i \in [n]$ subject to $\sum_i B_{i,j} = 1$, and for any other good $j \in [m] \setminus A_S$, $B_{i,j} \in [0,1]$ for any agent $i \in [n]$ subject to $\sum_i B_{i,j} = 1$. We let $\F$ denote the set of all such partially-fractional allocations, and let $A^\F$ denote the Nash optimal allocation in $\F$.\footnote{The valuation of an agent under a fractional allocation $B$ is given by $v(B_i) = \sum_j v(j)B_{i,j} $.} Since all integral allocations belong to $\F$, we have $\NW(A^\F) \geq \NW(A^*)$. Therefore, in order to prove the lemma, it suffices to show that $\NW(A) \geq \frac{1}{1.061} \NW(A^\F)$.

Define $\alpha := \min_{k \in [n]} v(A^\F_k) / \ell$. Observe that all goods in $A_S$, namely $j_1,j_2,\dots,j_s$, must belong to separate bundles in $A^\F$. This is because the combined value of all goods in $[m] \setminus A_S$ is strictly less than $2\ell(n-s)$. Therefore, if two (or more) goods in $A_S$ belong to the same bundle in $A^\F$ (say, $A^\F_a$), then there must exist another bundle in $A^\F$ (say, $A^\F_b$) with value strictly less than $2\ell$. In that case, we can simply swap the bundle $A^\F_b$ with one of the goods (of value more than $2 \ell$) in $A^\F_a$ and strictly improve $\NSW$, which is a contradiction. Therefore, without loss of generality, each good in $A_S$ belongs to a unique bundle in $A^\F$. Using this observation, we can reindex the bundles in $A^\F$ such that $j_i \in A^\F_i$ for all $i \in S$. 

It is easy to see that $\alpha \geq 1$.\footnote{If $\alpha < 1$, then it must be that $v(A^\F_n) < \ell$, i.e. a nonzero amount of fractional good is taken away from the bundle $A_n$. In that case, one can reassign (part of) this fractional good---currently assigned to one of the agents in $[n-1]$--- to $A^\F_n$ and strictly improve the Nash social welfare, contradicting the assumption that $A^\F$ is the Nash optimal in $\F$.} In addition, we can show that $\alpha < 2$. Indeed, as argued above, $\sum_{i > s} v(A^\F_i) < 2 \ell (n-s)$. This implies that $\alpha \ell = \min_i v(A^\F_i) < 2 \ell$, i.e., $\alpha <2$. Using this bound we can establish a useful structural property of $A^\F$: For all $i \in S$, the bundles $A^\F_i$ are singletons (i.e., $A^\F_i = \{ j_i \}$ for all $i \in S$) and, for all $k \notin S$, we have $v(A^\F_k) = \alpha \ell$. This follows from the observation that any bundle $A^\F_k$ in $A^\F$ which has a fractionally allocatable good (say, good $j$) is of value equal to $\alpha \ell = \min_a v(A^\F_a)$; otherwise, we can ``redistribute'' $j$ between $A^\F_k$ and $\arg\min_{k} v(A^\F_k)$ to obtain another fractional allocation with strictly greater $\NW$. Moreover, since for any $i \in S$, $j_i \in A^\F_i$ and $v(j_i) > \alpha \ell$ (recall that $\alpha < 2$), the bundle $A^\F_i$ does not contain a fractionally allocatable good. This, in particular, implies that $\cup_{i \in S}  A^\F_i = A_S$. All the remaining goods in $[m] \setminus A_S$ are fractionally allocatable, and hence the bundles $A^\F_k$ for all $k \notin S$ are of value equal to $\alpha \ell$. This structural property gives us the following bound for $\NW(A^\F)$:
\begin{align}
\NW(A^\F) & = \left(\prod_{i \in S} v(A^\F_i) \cdot \prod_{i \in [n] \setminus S} v(A^\F_i) \right)^{1/n} = \left( \prod_{i \in S} v(A_i) \cdot \left( \alpha \ell \right)^{(n-s)}\right)^{1/n}. 
\label{eqn:Nash_Upperbound}
\end{align}

We will now provide a lower bound for $\NW(A)$ that will allow us to prove the desired approximation guarantee. This is done by constructing an allocation $A' \in \F$ such that $\NW(A') \leq \NW(A)$. Along with \Cref{eqn:Nash_Upperbound}, this provides an analysis-friendly lower bound for the quantity $\NW(A') / \NW(A^\F)$. 

We start with the initialization $A' \leftarrow A$. Next, while there exist two agents $i,k \in [n]$ such that $\ell<v(A'_i)<v(A'_k)<2\ell$, we transfer goods of value $\Delta = \min\{ v(A'_i)-\ell,2\ell - v(A'_k) \}$ from $A'_i$ (the lesser valued bundle) to $A'_k$ (the larger valued bundle). Such a transfer is possible because the goods in the bundles with value less than $2\ell$ are allowed to be allocated fractionally. Notice that the Nash social welfare does not increase as a result of this transfer. Also, it is easy to see that this process terminates, since after each iteration of the while loop, either $v(A'_i)=\ell$ or $v(A'_k)=2\ell$ or both, and hence some agent can take no further part in any future iterations. Upon termination of the above procedure, there can be at most one agent (say, $r$) such that $v(A'_r)\in (\ell,2\ell)$; for every other agent $k\in [n]\setminus S$, $v(A'_k) \in \{\ell, 2\ell\}$.

Let $T = \{ k \in [n] \, : \, v(A'_k) \geq 2\ell \}$ and let $t = |T|$. Notice that by construction of $A'$, $S \subseteq T$; hence, $s \leq t$. We then have the following bound on the Nash social welfare of the allocation $A'$:
\begin{align}
\NW(A') & = \left( \prod_{i \in S} v(A'_i) \cdot \prod_{i \in T \setminus S} v(A'_i) \cdot  \prod_{i \in [n] \setminus T} v(A'_i) \right)^{1/n}  \geq \left( \prod_{i \in S} v(A_i) \cdot \left( 2 \ell \right)^{(t-s)} \cdot \ell^{(n-t)} \right)^{1/n}.
\label{eqn:Nash_Lowerbound}
\end{align} 

Let $\phi = \sum_{k \in [n] \setminus S} v(A_k)$ denote the combined value of all goods except for those in the set $A_S$. We will now use the allocations $A^\F$ and $A'$ to obtain upper and lower bounds for $\phi$, which in turn will help us achieve the desired approximation ratio for the allocation $A$. 

First, recall that the goods in the set $A_S = \{j_1, j_2, \ldots, j_s\}$ are allocated as singletons in $A^\F$ to the bundles $i \in S$ . Along with the fact that $v(A^\F_k) =  \alpha\ell$ for all $ k \in [n] \setminus S$, this gives
\begin{equation}
\phi = \sum_{k \in [n] \setminus S} v(A^\F_k) = (n-s) \cdot \alpha\ell.
	\label{eqn:phi_lowerbound}
\end{equation}

Next, in the allocation $A'$, each bundle corresponding to agents in $T \setminus S$ is valued at exactly $2\ell$, and that for each agent in $[n] \setminus T$ (except for the agent $r$) is valued at exactly $\ell$. By overestimating $v(A'_r)$ to be $2\ell$, we get
\begin{equation}
	\phi \leq 2\ell(t + 1 -s) + \ell (n-t - 1).
	\label{eqn:phi_upperbound}
\end{equation}

\Cref{eqn:phi_lowerbound,eqn:phi_upperbound} together imply that
\begin{equation}
\frac{t-s}{n-s} \geq \alpha - 1 - \frac{1}{n-s}.
\label{eqn:t_lowerbound}
\end{equation}

We can lower bound the quantity of interest $\frac{\NSW(A)}{\NSW(A^\F)}$, as below:
\begin{align*}
\frac{\NSW(A)}{\NSW(A^\F)} & \geq \frac{\NSW(A')}{\NSW(A^\F)}&\\
	& \geq \frac{\left( \prod_{i \in S} v(A_i) \cdot \left( 2 \ell \right)^{(t-s)} \cdot \ell^{(n-t)} \right)^{1/n}}{\left( \prod_{i \in S} v(A_i) \cdot \left( \alpha \ell \right)^{(n-s)}\right)^{1/n}} & (\text{from \Cref{eqn:Nash_Upperbound,eqn:Nash_Lowerbound}})\\
	& = \left( \frac{2^{t-s}}{\alpha^{n-s}} \right)^{1/n} &\\
	& \geq \left( \frac{2^{t-s}}{\alpha^{n-s}} \right)^{1/(n-s)} & \Bigg( \text{since } \left( \frac{2^{t-s}}{\alpha^{n-s}} \right)^{1/n} \leq \frac{\NSW(A)}{\NSW(A^\F)} \leq 1 \Rightarrow \frac{2^{t-s}}{\alpha^{n-s}} \leq 1 \Bigg)\\
	& \geq \frac{2^{\alpha - 1 }}{\alpha} & (\text{from \Cref{eqn:t_lowerbound} and for large }n)\\
	& \geq \frac{1}{2} e \ln 2 \approx \frac{1}{1.061} & (\text{minimum at } \alpha = 1/\ln 2 \approx 1.44).
\end{align*}
In the penultimate inequality, the reason for using the approximation $2^{\alpha - 1 - \frac{1}{n-s}} \approx 2^{\alpha - 1}$ is as follows: Imagine constructing a \emph{scaled-up} instance $\I'$ consisting of $c$ copies of the instance $\I$, where $c$ is arbitrarily large. Notice that $\I'$ has additive and identical valuations. Moreover, for any allocation $A$ that is a $\beta$-approximation for the instance $\I$, the allocation $B=(A,A,\dots,A)$ is $\beta$-approximation for $\I'$. Similarly, $A$ is $\EFx$ for $\I$ if and only if $B$ is $\EFx$ for $\I'$. Finally, write $n',s',\alpha',\ell'$ to denote the analogues of $n,s,\alpha,\ell$ in $\I'$. It is easy to see that $n' = cn$, $s' = cs$, $\alpha' = \alpha$ and $\ell' = \ell$. Since the agent with the least valuation in $\I$ (under allocation $A$) values his bundle at strictly below $2\ell$, we know that $s < n$, and thus the quantity $n' - s' = c(n-s)$ can be made arbitrarily large for appropriately chosen $c$. We can therefore ignore the term $\frac{1}{n-s}$ in the exponent of $2$ without loss of generality. This completes the proof of \Cref{lem:EFx_implies_approx_Nash}.
\end{proof}

The following example shows that the approximation guarantee of \Cref{lem:EFx_implies_approx_Nash} is almost tight.

\begin{example}[\textbf{Tightness of approximation factor for $\EFx{}$ allocations}]
\label{eg:Approx_factor_tight_for_EFx}
Consider a fair division instance with $m$ goods ($m$ is even) and $n = 2$ agents, where the (additive and identical) valuations are given as follows: $v(j_1) = v(j_2) = m - 2$, and $v(j_\ell) = 1$ for $\ell \in \{3,4,\dots,m\}$. Notice that the allocation $A = \{(j_1,j_2),(j_3,\dots,j_m)\}$ is $\EFx{}$. Additionally, $\NW(A) = \left( (2m-4) \cdot (m-2) \right)^{1/2}$. It is also clear that $\NW(A^*) = \frac{3}{2}(m-2)$. The approximation ratio of $A$ is given by 
$$\frac{\NW(A)}{\NW(A^*)} = \left( \frac{\left( 2(m-2) \right) \cdot (m-2)}{(3(m-2)/2) \cdot (3(m-2)/2)} \right)^{1/2} \approx \frac{1}{1.0607},$$
which closely matches the approximation guarantee of \Cref{lem:EFx_implies_approx_Nash}.
\end{example}

\section{Binary Valuations: Proof of Theorem~\ref{thm:BinaryVals}}
\label{sec:Proof_binary}
This section provides the proof of \Cref{thm:BinaryVals}, which we recall below:
\BinaryVals*

Our proof of \Cref{thm:BinaryVals} relies on a greedy algorithm (Algorithm~\ref{alg:BinaryVals}, hereafter referred to as $\AlgBinary{}$). Starting from any suboptimal allocation, $\AlgBinary{}$ identifies a pair of agents such that a chain of swaps between them provides the greatest improvement in Nash social welfare (from among all pairs of agents). \Cref{lem:Binary_main} quantifies the progress towards the Nash optimal allocation made by $\AlgBinary{}$ in each step. As it turns out, the algorithm is required to run for at most $2m(n+1) \ln (nm)$ iterations. Overall, this provides a polynomial time algorithm for computing a Nash optimal allocation for binary valuations. The detailed description of $\AlgBinary{}$ follows.

\paragraph{Greedy algorithm for binary valuations}
The input to $\AlgBinary{}$ is an instance with additive and binary valuations along with a suboptimal allocation, and output is a Nash optimal allocation. At each step, the algorithm performs a greedy local update over the current allocation. Specifically, given a partition $A = (A_1, A_2, \ldots, A_n)$, \AlgBinary{} constructs a directed graph $G(A)$ as follows: There is a vertex for each agent (hence $n$ vertices overall), and between any pair of vertices $u$ and $v$, there are $|\Gamma_v \cap A_u|$ parallel edges directed from $u$ to $v$.\footnote{Recall that $\Gamma_i := \{ j \in [m] \, : \, v_{i,j} > 0 \}$.} A directed edge $(u,v)$ exists if and only if there exists a good that is valued by $v$ and is currently assigned to $u$. Observe that a directed simple path $P=(u_1, u_2, \ldots, u_k)$ in $G(A)$ corresponds to a sequence of reallocations. For each directed edge $(u_i, u_{i+1})$, there exists a good $j \in A_{u_i}$ that can be reassigned to $u_{i+1}$ via the updates $A_{u_i} \leftarrow A_{u_{i}} \setminus \{ j \}$ and $A_{u_{i+1}} \leftarrow A_{u_{i+1}} \cup \{ j \}$.

\begin{algorithm}[t]
 \DontPrintSemicolon
 \KwIn{An instance $\langle [n], [m], \V \rangle$ with binary, additive valuations, and a partition $A$.}
 \KwOut{A Nash optimal partition $A'$.}
 \BlankLine
 Set $A^0 \leftarrow A$.\;
 \For{$i = 1$ to $2m(n+1) \ln (nm)$}{
 	Construct the graph $G(A^{i-1})$ for the current partition $A^{i-1}$.\;
 	Set $R \leftarrow \{ (u,v) \in [n] \times [n] \, : \, v \text{ is reachable from } u \text{ in } G(A^{i-1}) \} $.\;
 	\ForEach{$(u,v) \in R$}{
 		Set $A^{i-1}(u,v) \leftarrow$ The partition obtained by reallocating along some path from $u$ to $v$.\;
 	}
 	\uIf{$\max_{(u,v) \in R} \NW(A^{i-1}(u,v)) > \NW(A^{i-1})$}{ 	
	Update $A^{i} \leftarrow \argmax {A^{i-1}(u,v) \, : \, (u,v) \in R } \NW(A^{i-1}(u,v))$.
	}
	\Else{\KwRet $A^{i-1}$}
 }
 \caption{Greedy Algorithm for Binary Valuations (\AlgBinary{})}
\label{alg:BinaryVals}
\end{algorithm}

Let $A(P)$ denote the partition obtained by reallocating goods along the path $P=(u_1, u_2, \ldots, u_k)$. Such a reallocation increases (decreases) the valuation of $u_k$ ($u_1$) by one, while the valuations of all intermediate agents $u_2,\dots,u_{k-1}$ are unchanged. The algorithm \AlgBinary{} greedily selects a specific path $P$ in $G(A)$, and reallocates the goods along $P$ to obtain the partition $A':= A(P)$. \Cref{lem:Binary_main} below describes the progress towards the optimal solution made by such a reallocation.

\begin{lemma}
\label{lem:Binary_main}
Given a suboptimal partition $A$, there exist agents $u$ and $v$ such that $v$ is reachable from $u$ in $G(A)$, and reallocating along \textbf{any} directed path $P$ from $u$ to $v$ leads to a partition $A':= A(P)$ that satisfies
\begin{align*}
\ln\NW(A^*) - \ln\NW(A') \leq \left( 1 - \frac{1}{m} \right) &\left( \ln\NW(A^*) - \ln\NW(A) \right).
\end{align*}
Here $A^*$ denotes the Nash optimal partition.
\end{lemma}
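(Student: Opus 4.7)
My plan is a flow-decomposition argument comparing $A$ against the Nash optimum $A^*$. Let $x_i := v_i(A_i)$ and $x_i^*:=v_i(A_i^*)$, and put $\Phi(A):=\ln\NW(A^*)-\ln\NW(A)=\frac{1}{n}\sum_i\ln(x_i^*/x_i)$; this is strictly positive since $A$ is suboptimal. I will first reduce to the case that $A$ is non-wasteful: any zero-valued good in some $A_i$ can be moved to an agent that values it (or left alone if no such agent exists) without ever decreasing $\NW$, and non-wastefulness is preserved by any path-reallocation on $G(A)$. The crucial local observation, which I will call the \emph{gain identity}, is that when $A$ is non-wasteful, reallocating along a directed path $P=(u_1,\ldots,u_k)$ in $G(A)$ alters only the endpoint valuations: $x_{u_1}$ drops by $1$ (since $u_1$ values everything in $A_{u_1}$), $x_{u_k}$ rises by $1$ (since the last edge carries a good $u_k$ values), and every intermediate $u_i$ simultaneously gains and loses a single valued good. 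Hence
\[
  \Delta(P)\;:=\;\ln\NW(A(P))-\ln\NW(A)\;=\;\frac{1}{n}\Bigl[\ln\tfrac{x_{u_k}+1}{x_{u_k}}-\ln\tfrac{x_{u_1}}{x_{u_1}-1}\Bigr]
\]
depends only on the endpoints $(u_1,u_k)$. This immediately handles the ``any path'' clause, so it suffices to exhibit a single pair $(u,v)$ whose endpoint-only improvement meets the bound $\Phi(A)/m$.

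The next step is to build the exchange multigraph $H$: for each good $j$ with $j\in A_u\cap A^*_v$ and $u\neq v$, place a directed edge $u\to v$. Non-wastefulness of $A^*$ forces $v$ to value $j$, so $H\subseteq G(A)$. Viewing each edge as one unit of flow, $\mathrm{indeg}_H(i)-\mathrm{outdeg}_H(i)=x_i^*-x_i$, so a standard flow decomposition splits $H$ into directed paths $P_1,\ldots,P_K$ (from an over-allocated $u_k$, where $x_{u_k}\ge x_{u_k}^*+1\ge 2$, to an under-allocated $v_k$) plus cycles that fix every $x_i$. Since $|E(H)|\le m$ and each path uses at least one edge, $K\le m$.

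The main technical step will be the aggregation inequality $\sum_{k=1}^{K}\Delta(P_k)\;\ge\;\Phi(A)$. Each under-allocated $i$ is the terminus of exactly $c_i:=x_i^*-x_i$ paths, contributing $\tfrac{c_i}{n}\ln\tfrac{x_i+1}{x_i}$ to the left side; the telescoping $\ln x_i^*-\ln x_i=\sum_{j=x_i}^{x_i^*-1}\ln(1+1/j)$ combined with the monotonicity of $j\mapsto\ln(1+1/j)$ yields $\ln x_i^*-\ln x_i\le c_i\ln\tfrac{x_i+1}{x_i}$. A symmetric estimate shows that each over-allocated $i$, which originates $d_i:=x_i-x_i^*$ paths, satisfies $\ln x_i-\ln x_i^*\ge d_i\ln\tfrac{x_i}{x_i-1}$. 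Summing across all agents reassembles $\sum_k\Delta(P_k)$ on the left and $\Phi(A)$ on the right. Combined with $K\le m$, this yields $\max_k\Delta(P_k)\ge\Phi(A)/m$; picking $(u,v):=(u_{k^*},v_{k^*})$ for the maximizing index and invoking the gain identity for any $u\to v$ path in $G(A)$ produces exactly the inequality in the statement. The central obstacle is the aggregation inequality itself: the algebra is not deep, but one must correctly pair off the path-endpoint contributions with the per-agent telescoping of $\ln(x_i^*/x_i)$, on both the under- and over-allocated sides, and exploit the monotonicity of the discrete logarithmic increments.
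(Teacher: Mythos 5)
Your proposal is correct and follows essentially the same route as the paper: the exchange multigraph on goods in $A_u \cap A^*_v$, its decomposition into at most $m$ excess-to-deficit directed paths plus NSW-neutral cycles, and an averaging argument extracting one path whose reallocation gains at least $\frac{1}{m}\left(\ln\NW(A^*)-\ln\NW(A)\right)$. The only difference is cosmetic and lies in the last step: the paper applies the paths sequentially and uses concavity of $\ln$ to argue that the gain of $P_i$ when applied first (from $A$) dominates its gain at position $i$ in the sequence, whereas you prove the equivalent aggregation inequality $\sum_k \Delta(P_k) \geq \ln\NW(A^*)-\ln\NW(A)$ directly via monotonicity of the discrete increments $\ln(1+1/j)$ --- the same concavity fact in a different wrapper.
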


\begin{remark}
Note that there can be multiple paths $P$ from $u$ to $v$ in $G(A)$, and different goods that can be reallocated along a fixed edge of $P$, which might lead to different partitions $A(P)$. However, the Nash social welfare of \emph{any} resulting partition is the same, since the valuation of $u$ ($v$) goes down (up) by one and that of every other agent remains the same. Hence, the choice of path between a fixed pair of vertices is inconsequential.
\label{rem:Pick_Any_Path}
\end{remark}

In the remainder of this section, we will show that \Cref{lem:Binary_main} can used to prove \Cref{thm:BinaryVals}, followed by a proof of \Cref{lem:Binary_main}.

\begin{proof}[Proof of \Cref{thm:BinaryVals}]
\Cref{lem:Binary_main} ensures that if there does not exist an improving reallocation, then the current allocation $A^{i-1}$ is optimal. Hence, for the rest of the proof, we will focus on the case wherein the for-loop executes for all $2m(n+1) \ln (nm)$ steps.

The update rule followed by \AlgBinary{} and \Cref{lem:Binary_main} together guarantee that at the end of iteration $i$, we have
\begin{align*}
\ln\NW(A^*) - \ln\NW(A^i) \leq \left( 1 - \frac{1}{m} \right)  & ( \ln\NW(A^*) - \ln\NW(A^{i-1}) ).
\end{align*}

Repeated use of the above bound gives
\begin{align*}
\ln\NW(A^*) - \ln\NW(A^{i}) \leq \left( 1 - \frac{1}{m} \right)^{i} & ( \ln\NW(A^*) - \ln\NW(A^{0}) ).
\end{align*}

Since \AlgBinary{} executes for $2m(n+1) \ln (mn)$ iterations, the difference between the optimal partition $A^*$ and the partition $A'$ returned by the algorithm is given by
\begin{align*}
\ln\NW(A^*) &- \ln\NW(A')  \\
& \leq \left( 1 - \frac{1}{m} \right)^{2m(n+1) \ln (nm)}  ( \ln\NW(A^*) - \ln\NW(A^{0}) ) \\
& \leq \frac{1}{e^{2(n+1)\ln (nm)}}  ( \ln\NW(A^*) - \ln\NW(A^{0}) ) \\
& \leq \frac{1}{(nm)^{2(n+1)}} \ln\NW(A^*) \\
& \leq \frac{\ln m}{(nm)^{2(n+1)}} \qquad \qquad \text{(since $\NW(A^*) \leq m$ for binary valuations)}\\
& \leq \frac{1}{n m^{2n}} \qquad \qquad \qquad \ \text{(since $\ln m \leq m$, and $n,m \geq 2$)}\\
& < \frac{1}{n} \ln \left( 1+ \frac{1}{m^n} \right) \qquad \text{(since $\ln(1+x) > x^2$ for $0 < x < 0.5$)}.
\end{align*}
Thus, $\prod_{i \in [n]} v_i(A^*_i) < \prod_{i \in [n]} v_i(A'_i) \left( 1+\frac{1}{m^n} \right)$. We already know that $\prod_{i \in [n]} v_i(A'_i) \leq \prod_{i \in [n]} v_i(A^*_i)$. Since the valuations are assumed to be integral, and $\prod_{i \in [n]} v_i(A'_i) \leq m^n$, we have that $\NW(A^*) = \NW(A')$. Hence, $A'$ is Nash optimal. This completes the proof of \Cref{thm:BinaryVals}.
\end{proof}

We will now provide a proof of \Cref{lem:Binary_main}.

\begin{proof}[Proof of \Cref{lem:Binary_main}]
Our proof of existence of the desired path $P$ in the graph $G(A)$ is made convenient by the formulation of another graph $G^*(A)$. This graph is utilised only in the analysis of the algorithm and never explicitly constructed.

Recall that $A^*$ refers to a Nash optimal allocation. Consider the directed graph $G^*(A)$ consisting of $n$ vertices, one for each agent, and a directed edge $(u,v)$ for each good $j \in A_u \cap A_v^*$. The edge $(u,v)$ indicates that the good $j$ must be transferred from $u$ to $v$ to reach the optimal partition $A^*$. Note that the total number of edges in $G^*(A)$ is at most $m$.

Besides defining the graph $G^*(A)$, we also classify the agents depending on their valuation relative to $A^*$. In particular, let $\mathcal{E}$ and $\mathcal{D}$ denote the set of agents with \emph{excess} and \emph{deficit} valuations respectively, i.e., $\mathcal{E}:=\{ u \in [n] \, : \, |A_u| > |A^*_u| \}$ and $\mathcal{D}:=\{ v \in [n] \, : \, |A_v| < |A^*_v| \}$.\footnote{For binary valuations and a non-wasteful allocation $A$, we have $v_i(A) = |A|$ for each agent $i \in [n]$.} Any agent $ t \in [n] \setminus (\mathcal{E} \cup \mathcal{D})$ satisfies $|A_t| = |A^*_t|$.

The remainder of the proof consists of two parts: First, we will show that the edge set of $G^*(A)$ can be partitioned into simple directed paths $\mathcal{P}=\{P_1, P_2, \ldots, P_k\}$ and cycles $\mathcal{C} =\{C_1, C_2, \ldots \}$ such that each path $P_i \in \mathcal{P}$ starts at a vertex in $\mathcal{E}$ and ends at a vertex in $\mathcal{D}$. Second, we will use this decomposition to argue that one of the paths $P_i \in \mathcal{P}$ leads to a partition $A':= A(P_i)$ that satisfies the bound in \Cref{lem:Binary_main}. The lemma will then follow by observing that the edges of $P_i$ are also contained in the graph $G(A)$ constructed by \AlgBinary{}. Note that the existence of $P_i$ shows that the end vertex of $P_i$ (say, $v$) is reachable from the start vertex of $P_i$ (say, $u$) in $G(A)$. As noted earlier in \Cref{rem:Pick_Any_Path}, reallocating along \emph{any} path between $u$ and $v$ leads to the stated improvement in Nash social welfare.

We will start by proving the claim about decomposition of the edge set of $G^*(A)$. Consider a graph $H^*$ where for each vertex $u$ of $G^*(A)$, we include $\max (\text{indegree}(u),\text{outdegree}(u))$ vertices, say $\{u^1, u^2, \ldots\}$. Suppose the vertex $u$ has $\ell$ incoming edges and $\ell'$ outgoing edges in $G^*(A)$. To construct $H^*$, first we pick an arbitrary one-to-one assignment between the incoming edges and $\{ u^1, u^2, \ldots, u^\ell\}$. Similarly, each outgoing edge gets uniquely assigned to one of the vertices in $\{u^1, u^2, \ldots, u^{\ell'}\}$. With these assignments in hand, for every directed edge $e=(u,v)$ in $G^*(A)$, we include a directed edge $(u^i, v^j)$ in $H^*$ if and only if $e$ is assigned to $u^i$ and $v^j$. It is easy to see that each edge in $H^*$ corresponds to an edge in $G^*(A)$ and vice versa. 

Notice that each vertex in $H^*$ has at most one incoming and at most one outgoing edge. Furthermore, if $u^i$ is a source in $H^*$, then $u \in \mathcal{E}$. Similarly, if $v^j$ is a sink in $H^*$, then $v \in \mathcal{D}$. These properties together imply that the edges in $H^*$ can be partitioned into paths and cycles such that each path starts at a vertex $u^i$ with $ u \in \mathcal{E}$ and ends at a vertex $v^j$ with $v \in \mathcal{D}$. The correspondence between the edges of $H^*$ and $G^*(A)$ gives us the desired collection of paths $\mathcal{P}=\{P_1, P_2, \ldots, P_k\}$ and cycles $\mathcal{C}$ in $G^*(A)$.\footnote{We can ensure that the paths in $\mathcal{P}$ are \emph{simple} by removing cycles from each $P_i$ and placing such cycles in $\mathcal{C}$.} The aforementioned properties also imply that the paths in $H^*$ are edge-disjoint, therefore $k \leq m$.

We will now show that for one of the paths $P_i \in \mathcal{P}$ in $G^*(A)$ (and therefore, also in $G(A)$), the partition $A' := A(P_i)$ achieves the bound in \Cref{lem:Binary_main}. First, observe that reallocating along a cycle in $G^*(A)$ does not change the Nash social welfare. Hence, in order to reach a Nash optimal partition starting from $A$, it suffices to reallocate goods along the paths $P_1, P_2, \ldots, P_k$. Moreover, since the paths in $\mathcal{P}$ are edge disjoint in the graph $\H^*$, they correspond to reallocation of disjoint sets of goods. This means that the reallocations corresponding to a path $P_i \in \mathcal{P}$ can be performed independently of those corresponding to another path $P_j \in \mathcal{P}$.

Next, consider the sequence of partitions $B^1, B^2, \ldots, B^k$, obtained by successively reallocating along the paths $P_1, P_2, \ldots, P_k$. That is, $B^1 = A(P_1)$, $B^2 =B^1(P_2)$ and so on. Thus, the partition $B^k$ must be Nash optimal, i.e., $\NW(A^*) =  \NW(B^k)$. Consider the telescoping sum given by $\ln\NW(A^*) - \ln\NW(A) = \sum_{i=1}^{k-1} \ln\NW(B^{i}) - \ln\NW(B^{i-1})$, where $B^0 = A$. Since $k \leq m$, there must exist $i \in [k]$ such that 
\begin{equation}
\label{ineq:int0}
\ln\NW(B^{i}) - \ln\NW(B^{i-1}) \geq \frac{1}{m} \left( \ln\NW(A^*) - \ln\NW(A) \right).
\end{equation}
We will now show that the partition $A' := A(P_i)$ satisfies 
\begin{equation}
\label{ineq:int}
\ln\NW(A') - \ln\NW(A) \geq \ln\NW(B^{i}) - \ln\NW(B^{i-1}).
\end{equation}
Indeed, recall that each path in $\mathcal{P}$ starts at a vertex in $\mathcal{E}$ and ends at a vertex in $\mathcal{D}$. Hence, as we proceed through reallocations corresponding to $P_1, \ldots, P_k$, the cardinality of the set of goods assigned to any agent $u' \in \mathcal{E}$ is non-increasing and that of $v' \in \mathcal{D}$ is non-decreasing. Therefore, if $u$ ($v$) is the start (end) vertex of $P_i$, then $k_u  \geq  k'_u$ and $k_v \leq k'_v$, where $k_u$, $k_v$, $k'_u$ and $k'_v$ are the number of goods assigned to $u$ and $v$ in partitions $A$ and $B^{i-1}$ respectively. Since $\ln\NW(B^i) - \ln\NW(B^{i-1}) = \ln (k'_u-1) + \ln (k'_v+1) - (\ln k'_u + \ln k'_v )$ and $\ln\NW(A') - \ln\NW(A) = \ln (k_u-1) + \ln (k_v+1) - (\ln k_u + \ln k_v )$, the concavity of $\ln(\cdot)$ implies \Cref{ineq:int}. Finally, \Cref{ineq:int,ineq:int0} give us the desired relation
$$\ln\NW(A^*) - \ln\NW(A') \leq \left( 1 - \frac{1}{m} \right) (\ln\NW(A^*) - \ln\NW(A)).\qedhere$$
\end{proof}

Notice that the proof of \Cref{lem:Binary_main} works exactly the same way when for each agent $i$, $v_i(A_i) = f_i(|A_i|)$ for some \emph{concave} function $f_i$. That is, the valuation of an agent can be an (agent-specific) concave function of the cardinality (i.e., the number of nonzero valued goods owned by the agent). Thus, $\AlgBinary{}$ can find a Nash optimal allocation in polynomial time even when the valuation functions of agents are concave in cardinality. This observation is formalized in \Cref{cor:BinaryValsConcaveUtility}.

\begin{restatable}{corollary}{BinaryValsConcaveUtility}
 \label{cor:BinaryValsConcaveUtility}
 Given any fair division instance with concave and binary valuations, a Nash optimal allocation can be computed in polynomial time.
\end{restatable}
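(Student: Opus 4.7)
The plan is to observe that $\AlgBinary{}$ itself needs no structural modification; only the concavity step at the end of Lemma 1's proof requires a genuine upgrade. The graph $G(A)$ used by the algorithm, the auxiliary graph $G^*(A)$, the classification of agents into $\mathcal{E} := \{u : |A_u| > |A^*_u|\}$ and $\mathcal{D} := \{v : |A_v| < |A^*_v|\}$, and the edge-decomposition of $G^*(A)$ into paths $\mathcal{P}$ (each running from $\mathcal{E}$ to $\mathcal{D}$) and cycles $\mathcal{C}$ all depend solely on set memberships and on the sets $\Gamma_i$, which remain well defined in the concave-binary model. A preliminary observation I would record is that every reallocation along an edge of $G(A)$ preserves non-wastefulness, so $|A_u| = |A_u \cap \Gamma_u|$ is maintained throughout and the valuations $v_i(A_i) = f_i(|A_i|)$ are well defined and positive at every step, starting from any non-wasteful allocation of positive Nash social welfare.

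The genuine upgrade is the concavity inequality at the end of the proof of Lemma 1. There, the argument used concavity of $\ln(\cdot)$ directly because $v_i(A_i) = |A_i|$. In the concave-binary setting one has $\ln v_i(A_i) = \ln f_i(|A_i|)$; since each $f_i$ is concave and non-decreasing and $\ln$ is concave and non-decreasing on the positive reals, the composition $\ln \circ f_i$ is concave. Equivalently, the discrete differences $\ln f_i(k) - \ln f_i(k-1)$ are non-increasing in $k$. With $k_u \geq k'_u$ at the start vertex $u \in \mathcal{E}$ and $k_v \leq k'_v$ at the end vertex $v \in \mathcal{D}$ of the selected path $P_i$ (exactly as in the original proof), this yields
\begin{align*}
\ln f_u(k_u - 1) - \ln f_u(k_u) &\geq \ln f_u(k'_u - 1) - \ln f_u(k'_u), \\
\ln f_v(k_v + 1) - \ln f_v(k_v) &\geq \ln f_v(k'_v + 1) - \ln f_v(k'_v),
\end{align*}
and summing the two reproduces the inequality $\ln \NW(A') - \ln \NW(A) \geq \ln \NW(B^i) - \ln \NW(B^{i-1})$ that closes the argument for the concave-binary analogue of Lemma 1.

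With the generalized Lemma 1 in hand, the outer argument of Theorem 2 transfers with only cosmetic adjustments: the bound $\NW(A^*) \leq m$ is replaced by $\NW(A^*) \leq M := \max_i f_i(m)$, the iteration count is recalibrated to $O(m(n+1) \log (nM))$ so that the geometric decay of $\ln \NW(A^*) - \ln \NW(A^i)$ drops below the integrality gap, and the final step that upgrades an approximate inequality to $\NW(A^*) = \NW(A')$ goes through provided the $f_i$ take integer values (as is already assumed for binary valuations). Since $\log M$ is polynomial in the input size whenever the $f_i$ are specified in binary, the algorithm remains polynomial. I expect the only real friction to be verifying the concavity composition cleanly, as that is the one place where the identity $v_i(A_i) = |A_i|$ is genuinely superseded by something new; once that observation is recorded, the rest of the proof is transcription.
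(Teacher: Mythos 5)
Your proposal is correct and follows essentially the same route as the paper, which simply observes that the proof of \Cref{lem:Binary_main} carries over verbatim because the concavity step still applies when $v_i(A_i) = f_i(|A_i|)$ for concave $f_i$. You in fact supply more detail than the paper does (the concavity of $\ln \circ f_i$ and the recalibrated iteration count using $M = \max_i f_i(m)$), and these details are accurate.
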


\begin{remark}
\label{rem:BinaryVals_Concave_Budget_additive}
A well-studied class of valuation functions captured by \Cref{cor:BinaryValsConcaveUtility} is that of \emph{budget-additive} valuations~\cite{LLN01combinatorial}. Under this class, the valuation of an agent $i \in [n]$ for a set of goods $G \subseteq [m]$ is given by $v_i(G) := \min\{c_i,\sum_{j \in G} v_{i,j}\}$, where $c_i > 0$ is an (agent-specific) constant, known as the \emph{utility cap}.

Garg et al.~\cite{GHM18Approximating} recently gave a $(2.404+\varepsilon)$-approximation algorithm for this class (for any $\varepsilon > 0$). For binary valuations, a budget-additive valuation function turns out to be a special case of the concave-in-cardinality functions mentioned above. Hence, by \Cref{cor:BinaryValsConcaveUtility}, a Nash optimal allocation can be found in polynomial time when the valuations are binary and budget-additive. It is unclear whether the existing techniques for finding a Nash optimal allocation under binary and additive valuations~\cite{DS15maximizing} admit a similar generalization.
\end{remark}

\bibliographystyle{alpha}
\bibliography{NSW}

\end{document}